\newcommand{\Z}{\mathbb{Z}}
\newtheorem{theorem}{Theorem}
\renewcommand\vec{\mathbf}
\newcommand{\circled}[2][]{
  \tikz[baseline=(char.base)]{
    \node[shape=circle,inner sep=0.8pt,fill=black]
    (char) {\phantom{\ifblank{#1}{#2}{#1}}};
    \node[text=white] at (char.center) {\makebox[0pt][c]{\textbf#2}};}\xspace}
\def\orcid#1{\kern.08em\href{https://orcid.org/#1}{\protect\includegraphics[keepaspectratio,width=0.7em]{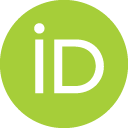}}}
\newcommand{\xilinx}[1]{\textcolor{black}{#1}}
\newcommand{\intel}[1]{\textcolor{black}{#1}}
\newcommand{\lattice}[1]{\textcolor{black}{#1}}
\newcommand{\microsemi}[1]{\textcolor{black}{#1}}
\newcommand{\cadence}[1]{\textcolor{black}{#1}}
\newcommand{\siemens}[1]{\textcolor{black}{#1}}
\definecolor{darkgreen}{RGB}{6, 156, 47}
\definecolor{darkred}{RGB}{240, 2, 2}
\newcommand{\cmark}{{\color{darkgreen}\ding{51}}}
\newcommand{\gbar}{{\color{lightgray}-}}
\begin{document}

\title{How Not to Protect Your IP -- An Industry-Wide Break of IEEE~1735 Implementations
\thanks{Funded by the Deutsche Forschungsgemeinschaft (DFG, German Research Foundation) under Germany´s Excellence Strategy - EXC 2092 CASA - 390781972 and ERC grant 695022.}
}


\author{\IEEEauthorblockN{
Julian Speith\IEEEauthorrefmark{1}\IEEEauthorrefmark{2}\orcid{0000-0002-8408-8518},
Florian Schweins\IEEEauthorrefmark{2}\orcid{0000-0003-0605-5937},
Maik Ender\IEEEauthorrefmark{1}\IEEEauthorrefmark{2}\orcid{0000-0002-0685-2541},
Marc Fyrbiak\IEEEauthorrefmark{1}\orcid{0000-0002-4266-7108},
Alexander May\IEEEauthorrefmark{2}\orcid{0000-0001-5965-5675},
Christof Paar\IEEEauthorrefmark{1}\IEEEauthorrefmark{2}\orcid{0000-0001-8681-2277}}
\IEEEauthorblockA{\IEEEauthorrefmark{1}Max Planck Institute for Security and Privacy, Bochum, Germany,\\
\href{mailto:julian.speith@mpi-sp.org}{julian.speith@mpi-sp.org}, 
\href{mailto:maik.ender@mpi-sp.org}{maik.ender@mpi-sp.org}, 
\href{mailto:marc.fyrbiak@mpi-sp.org}{marc.fyrbiak@mpi-sp.org}, 
\href{mailto:christof.paar@mpi-sp.org}{christof.paar@mpi-sp.org}
}
\IEEEauthorblockA{\IEEEauthorrefmark{2}Horst Görtz Institute for IT Security, Ruhr University Bochum, Bochum, Germany,\\
\href{mailto:florian.schweins@rub.de}{florian.schweins@rub.de}, 
\href{mailto:alex.may@rub.de}{alex.may@rub.de}
}}

\maketitle
\thispagestyle{plain}
\pagestyle{plain}

\begin{abstract}
Modern hardware systems are composed of a variety of third-party \ac{IP} cores to implement their overall functionality. 
Since hardware design is a globalized process involving various (untrusted) stakeholders, a secure management of the valuable \acs{IP} between authors and users is inevitable to protect them from unauthorized access and modification. 
To this end, the widely adopted IEEE standard 1735-2014 was created to ensure confidentiality and integrity.

In this paper, we outline structural weaknesses in IEEE~1735 that cannot be fixed with cryptographic solutions (given the contemporary hardware design process) and thus render the standard inherently insecure. 
We practically demonstrate the weaknesses by recovering the private keys of IEEE~1735 implementations from major \ac{EDA} tool vendors, namely \intel{Intel}, \xilinx{Xilinx}, \cadence{Cadence}, \siemens{Siemens}, \microsemi{Microsemi}, and \lattice{Lattice}, while results on a seventh case study are withheld.
As a consequence, we can decrypt, modify, and re-encrypt all allegedly protected \acs{IP} cores designed for the respective tools, thus leading to an industry-wide break.
As part of this analysis, we are the first to publicly disclose three RSA-based white-box schemes that are used in real-world products and present cryptanalytical attacks for all of them, finally resulting in key recovery. 
\end{abstract}

\begin{IEEEkeywords}
Hardware \acs{IP} Protection, IEEE Standard 1735-2014, Reverse Engineering, Key Extraction, White-Box RSA
\end{IEEEkeywords}

\begin{acronym}
    \acro{API}{Application Programming Interface}
    \acro{ASIC}{Application-Specific Integrated Circuit}
    
    \acro{CRT}{Chinese Remainder Theorem}
    
    \acro{DLL}{Dynamic Link Library}
    \acroplural{DLL}[DLLs]{Dynamic Link Libraries}
    \acro{DRM}{Digital Rights Management}
    \acro{DSP}{Digital Signal Processor}
    
    \acro{ECC}{Elliptic-Curve Cryptography}
    \acro{EDA}{Electronic Design Automation}
    
    \acro{FPGA}{Field-Programmable Gate Array}
    
    \acro{GCD}{Greatest Common Divisor}
    
    \acro{HDL}{Hardware Description Language}
    
    \acro{IC}{Integrated Circuit}
    \acro{IP}{Intellectual Property}
    
    \acro{MATE}{Man-at-the-End}
    \acro{MITM}{Man-in-the-Middle}
    
    \acro{PKI}{Public Key Infrastructure}
    
    \acro{RAM}{Random-Access Memory}
    \acro{RTL}{Register Transfer Level}

    \acro{SoC}{System-on-Chip}
    \acroplural{SoC}[SoCs]{Systems-on-Chip}
\end{acronym}

\section{Introduction}\label{sec:whitebox:intro}
The emerging digital society is based on a myriad of interconnected computers and embedded devices built from \acp{IC}.
Modern \acp{IC} are increasingly realized as \acp{SoC}, i.e., a single chip that incorporates a large number of different functional modules, referred to as \ac{IP} cores. 
In particular, modern \acp{SoC} can consist of hundreds of \ac{IP} cores that implement a wide range of functionalities, from simple periphery, to cryptographic co-processors, neural network accelerators, or even entire CPUs.
Given both the increasing complexity of modern hardware and strict time-to-market requirements~\cite{Bhunia2017}, hardware designers resort to the re-use of well-proven \ac{IP} cores, often from third-party providers.  
However, the use of third-party \ac{IP} cores provides several security challenges from both the \ac{IP} author as well as the \ac{IP} user perspective~\cite{Bhunia2017} due to the globalized hardware design and manufacturing process comprising numerous untrusted stakeholders.
For example, the author wants to safeguard their valuable \ac{IP} so that no other party can infringe their \ac{IP} or its license. 
Simultaneously, the user wants to assure that the purchased \ac{IP} is not modified by another stakeholder. 

\begin{table*}
    \centering
    \label{tab:whitebox:case_study_overview}
    \caption{Overview of our case studies on IEEE~1735 implementations of major \ac{EDA} tools.
        For each tool, we list whether we defeated the encountered software protections (\cmark) or no such protections were present at the time of analysis (\gbar). Additionally, we provide the number of recovered private keys and the approximated time required to perform each case study.}
    \begin{tabular}{lccccl}
        \toprule
        \multirow{2}{*}{\textbf{\ac{EDA} Tool}} & \multicolumn{3}{c}{\textbf{Defeated Software Protections}} & \textbf{\#Recovered} &  \multicolumn{1}{c}{\textbf{Time}}\\
        & Obfuscation & Anti-Debugging & White-Box & \textbf{Private Keys} &  \multicolumn{1}{c}{\textbf{Estimate}}\\
        \midrule
        \intel{Intel Quartus Prime} & \gbar & \gbar & \gbar & \intel{1} & $1$ hour\\
        \cadence{Cadence Xcelium} & \gbar & \gbar & \gbar & \cadence{2} & $1$ hour\\
        \lattice{Lattice Radiant} & \gbar & \gbar & \gbar & \lattice{1} & $3$ hours\\
        \microsemi{Microsemi Libero SoC} & \cmark & \cmark & (\cmark) & \microsemi{1} & $1$ week\\
        \siemens{Siemens ModelSim} & \gbar & \gbar & \cmark & \siemens{3} & $3$ days\\
        \xilinx{Xilinx Vivado Design Suite} & \cmark & \cmark & \cmark & \xilinx{5} & $2$ weeks\\
        \bottomrule
    \end{tabular}
\end{table*}

To address the secure management of hardware \ac{IP}, IEEE standard 1735-2014~\cite{Automation2015} was established to provide aforementioned confidentiality and integrity while ensuring interoperability between \ac{EDA} tools.
The standard is supported by all major tool vendors and is trusted throughout the industry~\cite{Distel2017}.

At CCS'17, Chhotaray et~al.~\cite{Chhotaray2017} presented an in-depth analysis of IEEE~1735.
The authors identified exploitable weaknesses with a focus on cryptographic flaws (e.g., a padding-oracle attack on non-authenticated symmetric encryption). 
They recommend using secure cryptographic algorithms, i.e., authenticated encryption, to protect against these attacks:
\begin{quote}
    \enquote{\textit{From a cryptographic perspective, the solution is simple. Use a provably secure authenticated encryption scheme that supports associated data (AEAD) to encrypt the sensitive IP [...].}}~\cite{Chhotaray2017}
\end{quote}
Our work proves this statement to be insufficient in practice. 
We go considerably further and question the standard's general security (even assuming strong cryptographic primitives) due to its foundational assumptions about the trust model and execution environment:
\begin{quote}
    \enquote{\textit{A decryption tool shall manage its secret or secrets in a private, secure manner. It may harden such secrets directly into its software or use an external persistent storage scheme.}}~\cite[p.~20]{Automation2015}
\end{quote}
Here, the crucial decryption module is embedded within an \ac{EDA} tool that is typically executed in an untrusted, adversary-controlled environment. 
This raises the question of how such a module can be protected so that an adversary cannot exploit the decryption software and disclose valuable \ac{IP}.

\par\smallskip\textbf{Goals and Contributions.}
In this paper, we analyze the security of the IEEE standard~1735-2014 and its implementations.
Our goal is to assess the standard's trust model assumption with respect to the capabilities of real-world adversaries. 
To this end, we review the standard's recommendations on both the handling of cryptographic key material and the toolchain execution environment, with a particular focus on adversaries with static and dynamic software reverse-engineering capabilities.
In the case studies listed in Table~\ref{tab:whitebox:case_study_overview}, we investigate how the standard is realized in market-leading \ac{FPGA} and \ac{ASIC} design and verification tools. 
Based on software reverse engineering, we reveal \textit{all} cryptographic private keys of the analyzed \ac{EDA} tools within weeks, days, or sometimes even just hours.
In particular, we bypass a wide variety of software protection measures, including three different white-box RSA implementations.
Thereby, we provide the first-ever description of public-key white-box schemes used in real-world applications.
We then analyze these schemes and provide multiple attacks, each of which is able to invalidate the standard's security properties of \ac{IP} confidentiality and integrity.
Our main contributions are:
\begin{itemize}
    \item {\bfseries Insecurity of IEEE~1735.}
    We describe flaws in the current IEEE standard 1735-2014 for hardware \ac{IP} protection. 
    Our investigation focuses on the foundational trust model assumptions, i.e., the \textit{untrusted execution environment}, the recommendation of \textit{hard-coded keys}, and the \textit{lack of guidance} on a secure implementation (see Section~\ref{sec:whitebox:standard}).
    We argue that these trust model assumptions cannot be addressed with cryptographic solutions given the current hardware design process and thus render the standard inherently insecure.
    
    \item {\bfseries Case Studies on Market-Leading \ac{EDA} Tools.}
    We analyze implementations of IEEE~1735 in six market-leading \ac{EDA} tools from \xilinx{Xilinx}, \intel{Intel}, \cadence{Cadence}, \siemens{Siemens}, \lattice{Lattice}, and \microsemi{Microsemi}.
    Thereby, we extract \textit{all} cryptographic private keys from these tools leading to a full break of the confidentiality and integrity of the protected \ac{IP}.

    \item {\bfseries Insecure RSA White-Boxes.}
    We analyze three white-box RSA schemes and present distinct cryptanalytical attacks for each of them: a (1)~code-lifting attack that yields a decryption oracle, and a (2)~key extraction attack that recovers the hidden secret key within seconds. 
    Our key extraction attacks exploit knowledge on the custom RSA structure to extract the secret key from the obfuscated decryption functions.
\end{itemize}

\textbf{Responsible Disclosure.}
Following standard responsible disclosure principles, we reported the discovered security vulnerabilities to the affected vendors at least three months ahead of publication, through the vendors' vulnerability disclosure programs or technical support platforms.
All vendors named in this paper have acknowledged our findings, are working towards (or have already implemented) mitigation improvements, and have agreed to publication.
By arrangement with the vendors, we sometimes refrain from providing a detailed elaboration on our reverse engineering process, as this work is not meant to act as a guide to \ac{IP} theft.
Hence, some statements are \textit{deliberately} kept vague.

\section{Background}\label{sec:whitebox:background}
In this section, we provide background on the hardware design flow and the role of \ac{IP} cores.
We also summarize key aspects of IEEE standard 1735-2014 and briefly review white-box cryptography.

\subsection{Hardware Design Flow}\label{subsec:whitebox:design_flow}
Hardware design generally follows a multi-stage process distributed across a global supply chain.
We distinguish between \acp{ASIC} that are fixed in their functionality and \acp{FPGA}, i.e., reconfigurable \acp{IC} that are programmed using a so called \textit{bitstream}.

\begin{figure*}
    \centering
    \includegraphics[width=0.9\textwidth]{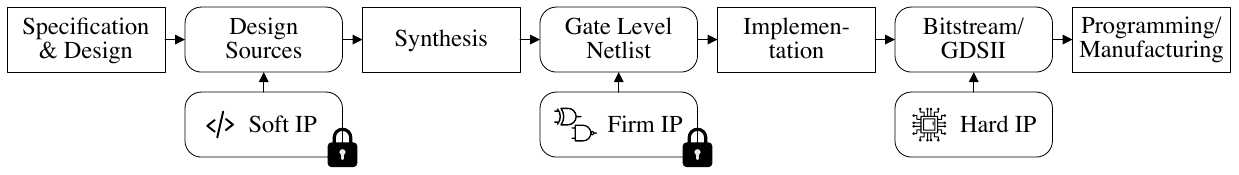}
    \caption{High-level view of the hardware design flow for \acp{FPGA} and \acp{ASIC}. 
    Different types of third-party \ac{IP} can be integrated into the design depending on the design stage.
    IEEE~1735 can be applied to protect both soft and firm \ac{IP}.}
    \label{fig:whitebox:design_flow}
\end{figure*}

\par\smallskip\textbf{Design Flow.}
First, a hardware design is specified and described in an \ac{HDL} such as (System)Verilog or VHDL. 
Next, the design files are analyzed and transformed by a so-called \textit{synthesizer} provided by the \ac{EDA} vendor.
This synthesis step yields a gate-level netlist that is composed of logic gates and their interconnections. 
Next, the netlist is implemented, i.e., technology-mapped, placed, and routed using tools appropriate for the selected technology. 
For \acp{FPGA}, the output of this implementation process is a bitstream file which is then programmed onto the \ac{FPGA}.
\ac{ASIC} implementation yields a layout file (e.g., GDSII) describing the technology-mapped and placed-and-routed netlist then sent to a fab for manufacturing, see Figure~\ref{fig:whitebox:design_flow}.

\par\smallskip\textbf{Types of \ac{IP} Cores.}
Third-party \ac{IP} can be integrated into the hardware design throughout the entire design process.
Three different types of \ac{IP} are distinguished depending on their level of abstraction, cf.~Figure~\ref{fig:whitebox:design_flow}. 
\textit{Soft \ac{IP}} cores are synthesizable descriptions of an \ac{IP} core. 
They offer flexibility to the \ac{IP} user and may be implemented on a wide range of architectures and technologies. 
\textit{Firm \ac{IP}} cores are synthesized gate-level netlists and therefore less flexible than soft \ac{IP}.
The reconstruction of their high-level description is related to the domain of netlist reverse engineering~\cite{Albartus2020,Meade2018a,Subramanyan2014a}.
For \acp{FPGA}, \textit{hard \ac{IP}} cores refer to additional on-device circuitry that can merely be activated and configured by the bitstream, but cannot be retrofitted to the device.
In contrast, hard \acp{IP} in the \ac{ASIC} context comprise a technology-mapped and placed-and-routed design commonly provided as a black-box by manufacturers.

\subsection{IEEE Standard 1735-2014}\label{subsec:whitebox:standard_description}
The IEEE standard 1735-2014~\cite{Automation2015}, also referred to as \enquote{\textit{IEEE Recommended Practice for Encryption and Management of Electronic Design Intellectual Property (IP)}},
refers to a set of guidelines on how to manage and protect electronic design \ac{IP}.
It has been approved and published in late 2015 to promote security and improve interoperability throughout the hardware industry and remains in effect ever since.
Note that, although never explicitly being mentioned in IEEE~1735, its setting is related to classical \ac{DRM}.
A revision of the standard has been announced for 2020 \cite{Krolikoski2020}, but has since been delayed.

\par\smallskip\textbf{Stakeholders.}
The standard defines three main stakeholders.
An \textit{\ac{IP} author} is the creator and legal owner of an \ac{IP} core.
Safeguarding the \ac{IP} is a central objective for the \ac{IP} author in order to protect business interests and prevent potential losses from infringements.
An \textit{\ac{IP} user} acquires rights to an \ac{IP} core through interaction with the \ac{IP} author.
A key interest of the \ac{IP} user is to minimize costs associated with a well-proven \ac{IP} core.
Hence, from a business perspective, there is a tension between an \ac{IP} author and its users.
For this reason, IEEE~1735 explicitly represents an \ac{IP} user as a potential attacker with the goal of infringing on the \ac{IP} author's design.
The standard counters this threat by use of cryptographic protections to safeguard the \ac{IP} core from illegitimate use.
Lastly, the \textit{tool vendor} provides design tools that enable the \ac{IP} user to interact with the \ac{IP} of an author.

\par\smallskip\textbf{Digital Envelope.}
IEEE~1735 dictates the use of a \textit{digital envelope} to protect the \ac{IP} itself as well as the rights granted by the \ac{IP} author.
Hence, the standard specifies a unified format for the protected \ac{IP} to ensure security and compatibility between the stakeholders and their tools. 
It thereby resorts to the use of a mark-up format to customize properties of the \ac{IP} core as well as the applied protection mechanisms. 
IEEE~1735 supports both VHDL and (System)Verilog. 
Hence, its scope is limited to soft and firm \ac{IP} cores, as illustrated by the locks in Figure~\ref{fig:whitebox:design_flow}.

A simplified illustration of the digital envelope is provided in Figure~\ref{fig:whitebox:envelope}.
The envelope is split into two sections, the first of which specifies access rights and additional properties. 
It comprises a number of \textit{right blocks} that specify the rights granted to the \ac{IP} user by its author within each supported \ac{EDA} tool.
For example, such rights can be used to  limit the visible information during simulation or synthesis.
There are two different kinds of rights: those that are common across all supported tools denoted as \textit{common block} and those that are specific to a particular tool denoted as \textit{tool block}. 
The digital envelope comprises one such tool block for every supported \ac{EDA} tool.
Thereby, the standard allows each tool to introduce custom, tool-specific rights.
The second section of the envelope contains the actual \ac{IP} data in encrypted form.

\begin{figure}
    \centering
    \includegraphics[width=0.4\textwidth]{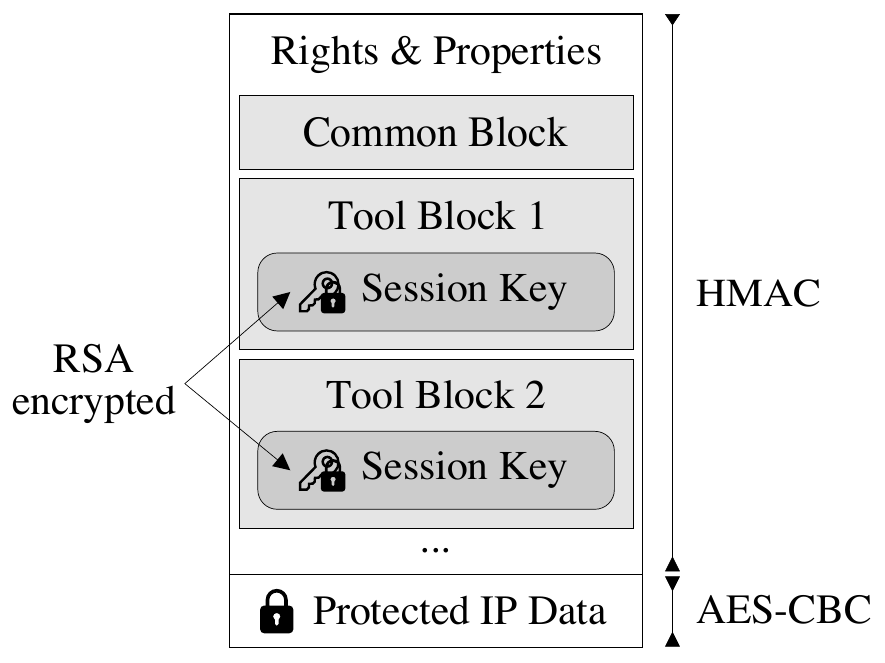}
    \caption{General structure of the digital envelope as specified by IEEE~1735. The envelope comprises the common and tool-specific right blocks, the encrypted session keys, and the protected \ac{IP} data.}
    \label{fig:whitebox:envelope}
\end{figure}

\par\smallskip\textbf{Confidentiality and Integrity.}
As the \ac{IP} data section contains the valuable \ac{IP} sources, upholding its confidentiality is crucial. 
Therefore, the \ac{IP} data is encrypted using AES-CBC and a unique 128- or 256-bit symmetric session key.
This session key is stored within each vendor's tool block after being encrypted using the corresponding tool vendor's RSA public key.
IEEE~1735 mandates the use of RSA keys comprising no less than 2048 bits, though this is not always strictly followed, as is evident from our case studies.
Each tool block contains a unique identifier for the required RSA key pair to support multiple different keys per vendor and facilitate adding new keys over time.

For each tool, the session key is also used to compute an HMAC over the common block as well as its tool block using SHA-256 or SHA-512.
Consequently, the session key needs to be decrypted before the integrity of the RSA key identifier is verified.
The HMAC allows each tool to verify the integrity of the common rights and its own tool-specific rights, but not the \ac{IP} data itself.
Notably, IEEE~1735 does \textit{not} enforce integrity of the encrypted \ac{IP} data and offers \textit{no} authenticity~\cite{Chhotaray2017}.
We emphasize that an exposed session key implies a break of the confidentiality and integrity of the protected IP.

Most \acp{IP} support multiple vendors, hence leakage of a single private key may directly compromise security of the entire scheme.
Therefore, the secure handling of each and every private key is vital.
As not only the private key but also the plaintext \ac{IP} reside in program memory during decryption, recovery of the plaintext \ac{IP} poses an additional threat.

\subsection{RSA}\label{subsec:whitebox:rsa}
In anticipation of our case studies in Sections~\ref{sec:whitebox:case_intel} to~\ref{sec:whitebox:case_xilinx}, we briefly recall RSA encryption and Miller's factorization algorithm~\cite{Miller1976}.

\par\smallskip\textbf{Plain RSA.} 
Let $N = pq$ be an RSA modulus. We denote by $\Z_N$ the ring of integers modulo $N$ with the multiplicative group $\Z_N^{*}$ of order $\varPhi(N) = (p-1)(q-1)$. Let $e \in \Z_{\varPhi(N)}^*$ be a public encryption exponent with corresponding RSA private key $d$ satisfying $ed = 1 \bmod \varPhi(N)$. The RSA encryption function is the map $\Z_N \rightarrow \Z_N, m \mapsto m^e \bmod N$. The RSA decryption function is the map $\Z_N \rightarrow \Z_N, c \mapsto c^d \bmod N$.

\par\smallskip\textbf{Private Key Implies Factorization.}
Given public exponent $e$ and private key $d$, Miller's well-known probabilistic reduction from RSA secret recovery to factoring~\cite{Miller1976} can be applied.
Let $ed-1 = 2^{k}t = 0 \bmod N$ with odd $t$. 
We take a random $g \in \Z_N^*$ until $g^t \not=1 \bmod N$. 
Let us then square $g^t$ until we obtain $1$ as result. 
Let $c < k$ be minimal such that $g^{2^{c+1} t} = 1 \bmod N$. 
Since $1$ has four square roots modulo $N$, we obtain $g^{2^c t} \not= \pm 1 \bmod N$ with probability $\frac 1 2$. In this case
\[
  \gcd( g^{2^c t} \pm 1, N) = \{p,q\}
\]
directly reveals the factorization of $N=pq$.

\subsection{White-Box Cryptography}\label{subsec:whitebox:whitebox_crypto}
We now recall standard notions of white-box cryptography in anticipation of the white-box case studies in Sections~\ref{sec:whitebox:case_microsemi}, \ref{sec:whitebox:case_siemens} and~\ref{sec:whitebox:case_xilinx}.
White-box cryptography and the associated white-box attacker model were introduced in the seminal work of Chow~et~al.~\cite{Chow2002a, Chow2002}. 
The goal of white-box cryptography is to protect a cryptographic secret such as a key within the implementation of a cryptographic algorithm itself.

\par\smallskip\textbf{Attacker Model.}
In traditional \textit{black-box} scenarios, the attacker has access to the cryptographic algorithm, can adaptively choose input plaintexts and/or ciphertexts, and observe the algorithm outputs.
However, the dynamic execution of the algorithm remains hidden from their eyes. 
In contrast, a \textit{white-box} setting grants the attacker full access to the software encompassing the cryptographic algorithm and its execution environment. 
The attacker may thus perform static code analysis, arbitrarily execute the white-box algorithm, examine intermediate values in memory, and dynamically manipulate these values during execution. 
White-box cryptography aims to provide security even in such hostile environments.

\par\smallskip\textbf{Security Goals.}
The two primary security goals of white-box cryptography are formalized as \textit{security against key-extraction} and \textit{security against code-lifting}~\cite{Chow2002a, Chow2002}.
A white-box cryptographic algorithm is secure against key extraction, if and only if an attacker cannot recover the embedded cryptographic secret from the implementation.
However, a white-box scheme that is \textit{only} secure against key extraction might not offer any security at all.
Even without any knowledge of the secret key, an attacker that can extract the white-box algorithm itself can then execute the algorithm and thereby yield an encryption/decryption oracle.
A white-box cryptographic algorithm that provides security against code-lifting attacks cannot be extracted, i.e., \textit{lifted}, from its application while maintaining functionality. 

\par\smallskip\textbf{Public-Key Cryptography.}
Most academic work on the topic deals with symmetric white-box techniques, the only exception -- to the best of our knowledge -- is the work by Barthelemy~\cite{Barthelemy2020} based on a lattice-based scheme.
Furthermore, the CHES 2021 WhibOx contest~\cite{CHES2021} aimed to obfuscate an ECDSA signature scheme to be secure in a white-box setting, however, all submissions have successfully been attacked after at most two days.
While the symmetric white-box setting has extensively been formalized~\cite{AlpirezBock2020,Bock2020,Delerablee2013,Saxena2009}, public-key white-box cryptography lacks any such formalization.
Despite these shortcomings and the limited availability of public research, numerous companies offer public-key white-box solutions~\cite{Digital.ai,Irdeto,Thales}, and hold patents in that area~\cite{Hoogerbrugge2020,Hoogerbrugge2019,Hoogerbrugge2018,Michiels2013,Zhou2009}.
\section{Attacks on IEEE Standard 1735-2014}\label{sec:whitebox:standard}
We now assess the security of IEEE standard 1735-2014.
To this end, we first outline the classic \ac{MATE} attacker model and then identify foundational security flaws in the trust model assumption and the key management of IEEE~1735.
The case studies presented in Sections~\ref{sec:whitebox:case_intel} to~\ref{sec:whitebox:case_xilinx} then demonstrate the severe implications of the discovered flaws in practice.

\subsection{Attacker Model}\label{subsec:whitebox:attacker_model}
\par\smallskip\textbf{Attacker Capabilities.}
Our attacker model is inspired by the well-established \ac{MATE} attack scenario~\cite{Akhunzada2015}.
We consider an attacker with full control over the \ac{EDA} tool execution environment (that contains the secret decryption keys). The attacker is able to thoroughly analyze the \ac{EDA} tool software by performing static and dynamic software analysis. In particular, the attacker is able to interact with the application at runtime and can thus manipulate and extract intermediate values during execution.
Note that this setting directly corresponds to the white-box attacker model outlined in Section~\ref{subsec:whitebox:whitebox_crypto}.

\par\smallskip\textbf{Attacker Goal.}
The attacker's objective is to break the confidentiality and integrity of the protected \ac{IP} to uncover and/or manipulate the plaintext \ac{IP}.
More precisely, the attacker aims to recover the decrypted session key provided within the \ac{EDA} tool rights block of the digital envelope as specified by IEEE~1735 (see Section~\ref{subsec:whitebox:standard_description}). 
Extracting the hard-coded RSA private key from the \ac{EDA} tool fulfills this goal as it allows for decryption of the session key.

\par\smallskip\textbf{Attacker Skill Set}
To contextualize the attack time per case study denoted in Table~\ref{tab:whitebox:case_study_overview}, we briefly summarize our skill set. 
The reverse engineers conducting our case studies are a Master's degree student with eight years of experience in the field of binary analysis and a PhD student working in the area hardware security for five years. 
The attacks on the RSA white-box decryption algorithms were conducted in collaboration with a cryptography expert with 25 years of experience in cryptanalysis.

\subsection{Principle Security Considerations} \label{subsec:whitebox:standard_flaws}
In light of an ongoing revision of the standard~\cite{Krolikoski2020}, we detail flaws in the protocol that break the alleged confidentiality and integrity of the protected \ac{IP}.
Despite not giving explicit strong security guarantees, the standard is trusted throughout the industry:
\begin{quote}
    \enquote{\textit{There are specific IEEE standards for evaluation and comprehension. 
    The effort for cracking this encryption is so high that it is difficult even for large firms.}}~\cite[pp.~139-140]{Distel2017}
\end{quote}

\par\smallskip\textbf{Untrusted Execution Environment.}
Since \ac{EDA} tools commonly run on the workstation(s) of \ac{IP} users, IEEE~1735 operates within a notoriously untrusted execution environment, i.e., an adversary has full control over the execution of said \ac{EDA} tools, see Section~\ref{subsec:whitebox:attacker_model}.
In this hostile setting, the standard attempts to provide confidentiality and (in parts) integrity of the protected \ac{IP} by employing both public-key and symmetric cryptography.
However, even in the presence of impeccable protection measures and analogously to \ac{DRM} use cases, the plaintext \ac{IP} will reside in memory during tool execution and thus can be recovered by an adversary.
To counter this threat, the standard encourages tool vendors to legally prevent reverse engineering of the design tools altogether:
\begin{quote}
    \enquote{\textit{Between tool vendors and their users [...], an agreement for use of the tool should forbid tampering and reverse engineering without being granted explicit permission.}}~\cite[p.~11]{Automation2015}
\end{quote}
However, in reality, an attacker who is willing to commit \ac{IP} theft or insert malicious circuitry into a protected \ac{IP} is unlikely to be stopped by a legal agreement as both attacks are already illegal on their own.

\par\smallskip\textbf{Hard-Coded Keys.}
To our surprise, IEEE~1735 recommends the same RSA private key to be used across all instances of an \ac{EDA} tool.
The standard furthermore proposes to \textit{hard-code} these keys into the design software itself:
\begin{quote}
    \enquote{\textit{One key should be used by a single product or a set of closely related products from one company that shares a common code base.}}~\cite[p.~25]{Automation2015}

    \enquote{\textit{It may harden such secrets directly into its software or use an external persistent storage scheme.}}~\cite[p.~20]{Automation2015}
\end{quote}
As becomes evident by our case studies in Sections~\ref{sec:whitebox:case_intel} to~\ref{sec:whitebox:case_xilinx}, not a single vendor resorts to an \enquote{external storage scheme} (e.g., dongles) as it is regarded to be impractical and not scaleable for most use cases.
An adversary who is able to recover a private key, e.g., by means of reverse engineering, can subsequently decrypt and manipulate each and every protected \ac{IP}.
Hence, this single point of failure constitutes a valuable target for attackers.

\par\smallskip\textbf{Lack of Guidance.}
Although the threat of key extraction is addressed within IEEE~1735, the standard omits a clear discussion on the importance of private key protection and regards the disclosure of \ac{IP} as an implementation issue: 
\begin{quote}
    \enquote{\textit{If a disclosure happens because a tool is hacked during execution by an untrusted IP user, using any combination of binary tampering and debugging techniques, that is an implementation issue. Each tool vendor needs to decide what, if any, anti-debug, anti-tamper, and anti-key-discovery technologies they will use here.}}~\cite[p.~11]{Automation2015}
\end{quote}
The standard mentions that it would be in the tool vendor's best business interest to invest in protecting its private keys, however, no guidance on a secure implementation of the key storage scheme is provided.
Moreover there is a limited incentive to invest into proper software protections for the private keys as only few tool vendors sell and license their own \ac{IP}.
Just as in a \ac{DRM} setting, the security of the entire scheme is founded upon the protection of the private keys, hence the failure to address these threats appropriately has devastating consequences in practice.

\subsection{White-Box Implementations}\label{subsec:whitebox:case_whitebox}
White-box cryptography (see~Section~\ref{subsec:whitebox:whitebox_crypto}), despite not being mentioned in the standard, appears to be a natural fit for the key management scheme introduced by IEEE~1735 due to its \ac{DRM}-like nature. 
Hence, some vendors use white-box implementations to protect their private keys.
Note that we adopt the framing of the discovered algorithms as \textit{white-box} cryptography from the vendors to then evaluate their implementations within the white-box model in Sections~\ref{sec:whitebox:case_microsemi}, \ref{sec:whitebox:case_siemens} and~\ref{sec:whitebox:case_xilinx}.
Thereby, we reveal that these commercial white-box solutions do not fulfill the notion of white-box cryptography and fail to uphold its basic security requirements, i.e., \textit{security against key extraction} and \textit{security against code-lifting}.

\begin{figure}
    \centering
    \includegraphics[width=0.48\textwidth]{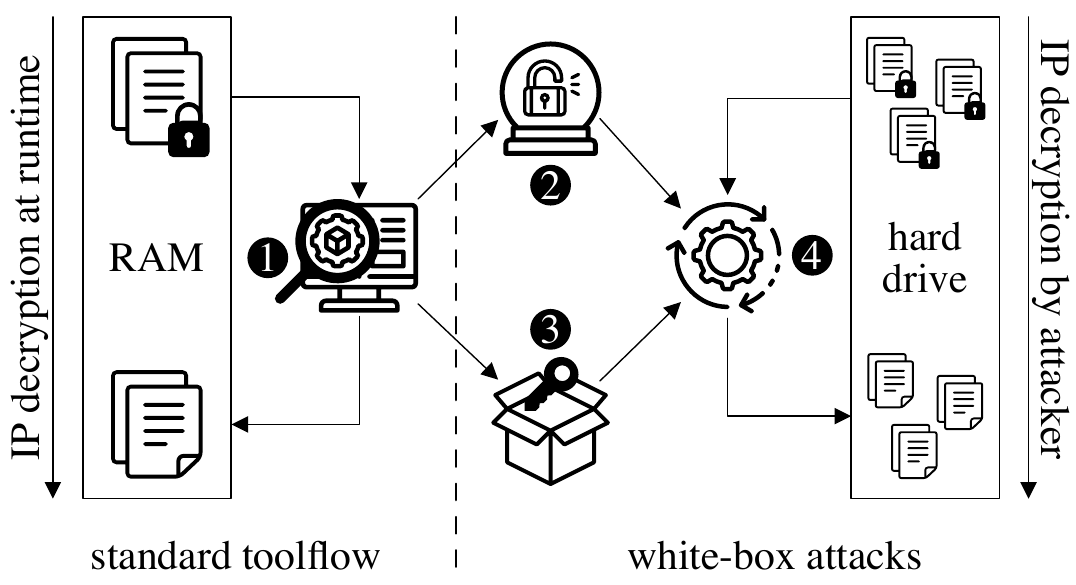}
    \caption{High-level overview of the standard \ac{EDA} tool flow for \ac{IP} decryption as well as code-lifting and key-extraction attacks targeting a white-box implementation.}
    \label{fig:whitebox:attack}
\end{figure}

\par\smallskip\textbf{White-Box Attack Strategies.}
Figure~\ref{fig:whitebox:attack} gives a high-level overview of the standard tool flow for \ac{IP} decryption and our overall attack strategy.
By default, the \ac{EDA} tool loads the protected \ac{IP} into program memory ahead of decryption.
It proceeds to decrypt the symmetric session key using the RSA private key specified within the \ac{IP}'s digital envelope.
The session key is then used for the decryption of the protected \ac{IP} data.
After decryption, the plaintext \ac{IP} resides within program memory awaiting operations such as synthesis or simulation.

Our attack strategy evolves around the two central security properties of white-box cryptography, i.e., \textit{security against code-lifting} and \textit{security against key extraction}.
For every case study involving a white-box, \circled{1}~we analyze the decryption process using static and dynamic software reverse-engineering techniques to locate the cryptographic subroutines.
Having identified the white-box RSA decryption algorithm, \circled{2}~we proceed to lift it from the main application by replicating it in a high-level language.
This replica can be seen as a decryption oracle, as it carries the private key embedded within its implementation.
Given the replicated white-box, we generate an abstract, formalized description of the white-box algorithm. 
Based upon this description, \circled{3}~we develop key-extraction attacks to recover the hidden private key.
From this point on, by keeping a record of either the replicated white-box algorithm or the recovered secret key, \circled{4}~we can decrypt the symmetric session key and consequently uncover the plaintext of any arbitrary \ac{IP} protected by IEEE~1735 and destined for use within the respective tool.
Both attack strategies combined break the two crucial security properties of white-box cryptography.

\section{\intel{Case Study: Intel Quartus Prime}}\label{sec:whitebox:case_intel}
\intel{Since the acquisition of Altera, Intel has grown to be the second largest vendor of \acp{FPGA} in the market. 
Intel Quartus Prime is their main \ac{EDA} tool for \ac{FPGA} design.
Our analysis deals with Intel Quartus Prime in version 21.1 for Windows.
At first, we identify \acp{DLL} of interest by searching for the documented~\cite{IntelIP} RSA key name in program memory at runtime.
Thereby, we uncover two potential target \acp{DLL}, one of which only provides IEEE~1735 encryption capabilities and contains public keys of other vendors.
The other \ac{DLL} is (at least partially) equipped with symbols that, f.i., provide human-readable names to the library functions and thereby support the reverse engineering process.
In order to identify cryptographic subroutines, we automatically scan for unique strings such as \texttt{key\_keyname} using a static analysis tool.
According to IEEE~1735, \texttt{key\_keyname} precedes the actual key identifier within the digital envelope.
By analyzing the string references, we uncover a function called \texttt{assemble\_key}, which prepares the RSA private key for decryption.
Note that the private key is solely protected by XORing of two hard-coded byte-arrays.
Hence, revealing the RSA private key is straight-forward.
}
\section{\cadence{Case Study: Cadence Xcelium}}\label{sec:whitebox:case_cadence}
\cadence{Cadence Design Systems develops \ac{EDA} tools for \ac{ASIC} design that see widespread use throughout the industry.
Cadence Xcelium is a verification tool based on logic simulation, which provides encryption and decryption capabilities in compliance with IEEE~1735.
In this case study, we analyze Xcelium 21.0 on CentOS.
By going through user documentation~\cite{CadenceIP}, we identify two target keys, one of which is an insecure (and by now deprecated) 512-bit RSA key that has been in active use until approximately 2015.
Due to the availability of symbols and the absence of code obfuscation, automated scanning of the executable for the documented RSA key names reveals the target function.
When this function is invoked, one of two other functions is called depending on the selected secret key. 
These functions both return a static memory address pointing to the private key stored in plaintext.
Hence, reverse engineering and recovering the RSA private key is straightforward.
}
\section{\lattice{Case Study: Lattice Radiant}}\label{sec:whitebox:case_lattice}
\lattice{Lattice Semiconductor is one of the four major \ac{FPGA} vendors.
They provide Lattice Radiant to develop hardware designs for their \acp{FPGA}.
By investigating Lattice's implementation of IEEE~1735 within Lattice Radiant 2.2.1 for Windows, we identify the target library by again automatically searching for the documented key name \cite{LatticeIP}. 
Thereby, we discover a function that loads a secret key which is identified by name. 
In order to extract the RSA private key from the decryption context, we follow the references to callers of that function.
We subsequently identify another function that takes as input a key name, the encrypted data, and an empty buffer apparently reserved for the decrypted data. 
The discovered decryption function then calls the RSA decrypt function provided by OpenSSL.
Next, we inject custom code triggering the decryption on the targeted private key and subsequently extract the RSA private key from the OpenSSL decryption at runtime.}
\section{\microsemi{Case Study: Microsemi Libero SoC}}\label{sec:whitebox:case_microsemi}
\microsemi{Microsemi is one of the four major \ac{FPGA} companies and provides solutions for aerospace, defense, communications, and industrial markets.
In this case study, we focus on Libero SoC v2021.1 for Windows, their current \ac{FPGA} design tool.

\subsection{Reverse Engineering}\label{subsec:whitebox:case_microsemi_reversing}
Similar to the first three case studies, automatic scanning for the RSA key name \cite{MicrosemiIP} reveals a valid 2048-bit RSA private key in program memory.
However, since we do not have access to an encrypted \ac{IP} or the corresponding RSA public key, we could not verify its correctness on our own.
After consultation with Microsemi, this key turned out to be a decoy.

We proceed by dynamically analyzing the executable.
To this end, we trigger the RSA decryption routine and observe its operation at runtime by executing Microsemi Libero SoC and importing a specifically crafted \ac{IP}.
We observe that Libero SoC first feeds the encrypted soft \ac{IP} into a synthesizer which in turn yields a synthesized and re-encrypted \ac{IP} at netlist-level.
Next, we turn our attention to the subprocess calls taking place on this re-encrypted netlist, as this is the first time Libero SoC itself interacts with the encrypted files via a dedicated executable.
We observe that this executable is dynamically linked against a \ac{DLL} equipped with code obfuscation which hints at its valuable content.

We again detect instances of the RSA key name within this library, but since static analysis appears to be cumbersome due to obfuscation, we opt for a dynamic approach instead.
However, the binary is equipped with debugger detection mechanisms and terminates standard execution when attaching a debugger. 
We defeat (almost) all anti-debugging measures protecting the \ac{DLL} using ScyllaHide~\cite{scyllahyde}.
Moreover, we discover the use of integrity checks throughout the library, preventing us from inserting software breakpoints, but evade this protection by the use of hardware breakpoints.
We discover the ciphertext, i.e., the RSA-encrypted session key, in memory as it is passed as an input to a function that is called in close proximity to the discovered RSA key name.
Hence, we place a (hardware-based) memory access breakpoint on the ciphertext, which holds execution upon the first access to the monitored memory address.
In our case, the breakpoint is triggered at the entry point to the cryptographic computations.
Thereby, we skip the obfuscated parts of the binary and reveal the (unobfuscated) functions that are responsible for decryption of the session key.

\subsection{Private Key Recovery}\label{subsec:whitebox:case_microsemi_extraction}
The RSA decryption is implemented using Montgomery modular multiplications~\cite{Montgomery1985}, as revealed by static analysis of the unobfuscated cryptographic functions. 
Note that this technique is typically used to speed up subsequent multiplications as required for the modular exponentiation of RSA. 
By identifying the forward and backward transformations into and from Montgomery domain, we discover the main RSA decryption routine between these two function calls.
We observe that the control flow of the decryption differs between executions, utilizing multiple distinct exponentiation algorithms.
By tracing through a single decryption and dumping inputs and outputs for each exponentiation, we reconstruct an entire decryption run.
Our investigation shows an extended version $d_\ell$ of Microsemi's 7680-bit RSA private key $d$ 
to be split into four hard-coded shares, i.e., two 7680-bit numbers $d_1, d_2$ and the two 32-bit values $d_3, d_4$.
Further analysis reveals that $d_\ell$ is computed as 
\begin{equation*}
    d_\ell = d_1 \cdot d_2 + d_3 \cdot 2^{32} + d_4 = d + k \cdot \varPhi(N),\: k \in \mathbb{N}.
\end{equation*}
Note that Microsemi refers to this implementation as a white-box, hence we have indicated this within Table~\ref{tab:whitebox:case_study_overview} using braces.

For recovery of the original 7680-bit private RSA key $d = d_\ell \bmod \varPhi(N)$, we factor $N=pq$ to compute $\varPhi(N) = (p-1)(q-1)$. 
To this end, we leverage the well-known factorization approach outlined in Section \ref{subsec:whitebox:rsa}, which also works for $d_\ell$.}
\section{\siemens{Case Study: Siemens ModelSim}}\label{sec:whitebox:case_siemens}
\siemens{Ever since incorporating Mentor Graphics in 2016, ModelSim is Siemens' tool for \ac{RTL} as well as netlist-level simulation and comes with support for IEEE~1735~\cite{SiemensIP}.
Siemens ModelSim is commonly bundled with other \ac{EDA} tools and \ac{IP} encrypted for these tools often includes an additional tool block for ModelSim, see Section \ref{subsec:whitebox:standard_description}. 
Hence, extracting the private RSA keys from Siemens ModelSim impacts multiple vendors at once.

\subsection{Reverse Engineering}\label{subsec:whitebox:case_siemens_reversing}
We target ModelSim Pro version 2020.4 for Windows.
In order to trigger and inspect the decryption at runtime, we load a protected \ac{IP} into the \ac{EDA} tool similar to previous case studies.
As we could not identify cryptographic functions within the main ModelSim executable using static analysis, we attach a debugger and track all sub-process calls.
Thereby, we identify a sub-process that handles decryption of the protected IP, as indicated by its console output. 
Using static analysis, we verify the use of a well-known cryptographic library within this sub-process, but are unable to identify all cryptographic functions right away.
However, the binary is still equipped with assert statements, which include the file path and line number of the original asserts within the source code.
Thus, these asserts help us to identify the missing functions by matching file names and line numbers with the open-source cryptographic library implementation.

\subsection{Code-Lifting Attack}\label{subsec:whitebox:case_siemens_lifting}
Through dynamic analysis, we discover that the standard RSA decryption is not called during execution.
Instead, a custom decryption routine is implemented using the \ac{API} provided by the cryptographic library. 
Another assert statement discovered within the binary hints at an RSA white-box implementation using the \ac{CRT}.

In order to recover the session key, we first opt for a code-lifting attack that reconstructs the white-box decryption to be used as an oracle, see Section~\ref{subsec:whitebox:whitebox_crypto}.
To this end, we reconstruct the white-box algorithm in a high-level language.
We verify correct functionality by decrypting an arbitrary encrypted \ac{IP} using our implementation.
During analysis we discover support for three distinct RSA key pairs, two 1024-bit and one 2048-bit pair.
The same decryption routine is called for all three private RSA keys.
Given the white-box data associated with each key, we can then decrypt protected IP even without knowledge of the keys.

This reconstruction does not only break the essential white-box security notion of \textit{security against code-lifting attacks}, but also serves as the basis for the understanding of the underlying white-box algorithm as well as the extraction of the secret keys in the next section.

\subsection{Private Key Extraction}\label{subsec:whitebox:case_siemens_obfuscation}

\par\smallskip\textbf{\acs{CRT}-RSA.}
Let us recall standard \acs{CRT}-RSA decryption as shown in Algorithm~\ref{alg:whitebox:rsa_crt_exp}. 
Let $d_p = d \bmod p-1$ and $d_q = d \bmod q-1$. 
We compute $c^d \bmod N$ via $c^d = c^{d_p} \bmod p$ and $c^d = c^{d_q} \bmod q$, and eventually combine both results using the \ac{CRT}.

\begin{algorithm}[H]
    \caption{\textsc{CRT-Exp}($c, d , d_p, d_q, p, q, \gamma, N$)}
    \label{alg:whitebox:rsa_crt_exp}
    \begin{algorithmic}[1]
        \State{$m_p = c^{d_p} \bmod p.$}
        \State{$m_q = c^{d_q} \bmod q.$}
        \State{$m = (m_p - m_q)\gamma + m_q \bmod N$}\Comment{apply CRT}
        \State\Return{$m$}
    \end{algorithmic}
\end{algorithm}
\noindent The CRT-parameter $\gamma$ satisfies 
\begin{align}
\label{eq:whitebox:crtparam}
   \gamma = 
   \begin{cases}
   1 \bmod p, \\
   0 \bmod q.
   \end{cases}
\end{align}
Thus, in Step 3 of Algorithm~\ref{alg:whitebox:rsa_crt_exp} we compute $m \in \Z_N$ satisfying 
\begin{align*}
   m = 
   \begin{cases}
   m_p = c^{d_p} = c^d \bmod p, \\
   m_q = c^{d_q} = c^d \bmod q.
   \end{cases}
\end{align*}

Computing \textsc{CRT-Exp} is four times faster than computing $c^d \bmod N$ directly, hence it is commonly applied in practice to speed up RSA exponentiations.
However, \textsc{CRT-Exp} is also much harder to obfuscate, since \textit{all five} parameters $d_p, d_q, p, q, \gamma$ directly reveal the factorization of $N$. 
This is obvious for $p,q$. 
For $d_p$ we have $c^{d_p}-c = 0 \mod p$ for any $c$, and therefore $\gcd(N, c^{d_p}-c) = p$, analogous for $d_q$. 
And last, by Equation~(\ref{eq:whitebox:crtparam}) we have $\gcd(N, \gamma-1) = p$ and $\gcd(N, \gamma) = q$.

\medskip
\par\smallskip\textbf{Obfuscated \acs{CRT}-RSA.}
We now outline how Siemens obfuscates the critical values $d_p, d_q, p, q, \gamma$, where the first four are $512$-bit each.

\begin{description}
\item[$d_p, d_q$:] Additively split $d_p = d_{p,1} + d_{p,2}$ and $d_q = d_{q,1} + d_{q,2}$. 
\item [$p,q$:] Use moduli $\tilde p = kp$, $\tilde q = \ell q$ with $8$-bit integers $k$ and $\ell$. 
Notice that if we compute $m_p = c^{d_p} \bmod \tilde p$, then $m_p$ also holds modulo $p$. 
Thus, the correctness of Algorithm~\ref{alg:whitebox:rsa_crt_exp} is maintained. 
However, $\tilde p$ and $\tilde q$ still reveal the factorization by computing $p=\gcd(\tilde p, N)$ or $q=\gcd(\tilde q, N)$. 
Therefore, Siemens ModelSim additively splits $\tilde p = p_1 - p_2$ and $\tilde q = q_1 - q_2$, where $p_2, q_2$ are 256-bit numbers. 
Siemens ModelSim's sophisticated obfuscation technique \textsc{Obf-Mod} now performs a reduction modulo $\tilde p$ using only the split $p_1, p_2$, see Algorithm~\ref{alg:whitebox:reduction}. 
\item[$\gamma$:] The parameter is split multiplicatively and additively, thereby yielding $\gamma = \gamma_1(\gamma_2- \gamma_3)$. 
\end{description}
Altogether, this results in the obfuscated \acs{CRT}-RSA exponentiation shown in Algorithm~\ref{alg:whitebox:obf_rsa_crt_exp}.

\begin{algorithm}[H]
    \caption{\textsc{Obf-CRT-Exp}($c$, $d_{p,1}$, $d_{p,2}$, $d_{q,1}$, $d_{q,2}$, $p_1$, $p_2$, $\gamma_1$, $\gamma_2$, $\gamma_3$, $N$)}
    \label{alg:whitebox:obf_rsa_crt_exp}
    \begin{algorithmic}[1]
        \State{$m_p = \textsc{Obf-Mod}(c^{d_{p,1}} \cdot c^{d_{p,2}}, p_1, p_2)$}
        \State{$m_q = \textsc{Obf-Mod}(c^{d_{q,1}} \cdot c^{d_{q,2}}, q_1, q_2)$}
        \State{$m = (m_p - m_q)\gamma_1(\gamma_2 + \gamma_3) + m_q \bmod N$}\Comment{apply CRT}
        \State\Return{$m$}
    \end{algorithmic}
\end{algorithm}

Let us now have a closer look at \textsc{Obf-Mod} in Algorithm~\ref{alg:whitebox:reduction} that performs the obfuscated modular reduction of some parameter $a$ modulo $\tilde p =  p_1 - p_2$ using only $p_1, p_2$. 

\begin{algorithm}[H]
    \caption{\textsc{Obf-Mod}($a, p_1, p_2$)}
    \label{alg:whitebox:reduction}
    \begin{algorithmic}[1]
        \State{Set $q = \lfloor \frac a {p_1} \rfloor$.}
        \State{Set $r = a \bmod p_1$.}
        \State\Return{$q \cdot p_2 + r \bmod N$}
    \end{algorithmic}
\end{algorithm}
\noindent Notice that \textsc{Obf-Mod} maps $a$ to the value
\begin{align*}
    q \cdot p_2 + r & = \lfloor \frac {a}{p_1} \rfloor  \cdot p_2 + (a \bmod p_1) \\
    & = \lfloor \frac a {p_1} \rfloor  \cdot p_2  + a - \lfloor \frac a {p_1}\rfloor \cdot p_1 \\
    & = a - \lfloor \frac a {p_1} \rfloor  \cdot \left(p_1-p_2\right) \\
    & = a - \lfloor \frac a {p_1}\rfloor  \cdot \tilde p.
\end{align*}
Thus, \textsc{Obf-Mod} indeed reduces $a$ by a multiple of $\tilde p$. 
However, notice that \textsc{Obf-Mod} \textit{does not} realize the standard Euclidean reduction $a - \lfloor \frac a {\tilde p}\rfloor  \cdot \tilde p$ with a remainder in $[0,\tilde p)$. Nonetheless, since $p_2 \ll p_1$ we have $\tilde p \approx p_1$, and therefore $a$ gets sufficiently reduced.

\par\smallskip\textbf{Private Key Recovery.} 
We extract $\gamma = \gamma_1(\gamma_2 + \gamma_3)$ and compute the factorization of $N$ as $\gcd(\gamma-1, N) = p$ and $\gcd(\gamma, N)=q$.
For verification we check whether for $\tilde p = kp = p_1-p_2$ and $\tilde q = \ell q = q_1-q_2$ it holds that $\gcd(N, \tilde p) = p$ and $\gcd(N, \tilde q) = q$. 
Eventually, we compute $d_p = d_{p,1} + d_{p,2}$ and $d_q = d_{q,1} + d_{q,2}$ to verify that for random $c \in \Z_N$ we have $\gcd(N, c^{d_p}-c)=p$ and $\gcd(N, c^{d_q}-c)=q$.
Recovering $d$ now is as simple as $d = e^{-1} \bmod \varPhi(N)$.

}
\section{\xilinx{Case Study: Xilinx Vivado Design Suite}}\label{sec:whitebox:case_xilinx}
\xilinx{Xilinx is the world's largest manufacturer of \acp{FPGA}.
\textit{Xilinx Vivado Design Suite} is their current \ac{EDA} tool for \ac{FPGA} development.
We target \textit{Xilinx Vivado} version 2020.2 for Windows, but have verified applicability to 2019.2 and~2020.1.

\par\smallskip\textbf{Remark.}
Xilinx makes use of white-box cryptography to protect (some of) its RSA private keys.
After concluding our work on Xilinx Vivado in August 2020, we discovered another white-box attack~\cite{widevine} on Google's Widevine \ac{DRM} solution published on October 31, 2020.
We verified that Widevine used the same white-box scheme as Xilinx, hence indicating that the white-box is supplied by a third-party vendor.

\subsection{Reverse Engineering}\label{subsec:whitebox:case_xilinx_reversing}
As in previous case studies, we begin by exploring the Xilinx Vivado executable using static analysis.
We locate implementations of cryptographic primitives by scanning the program memory for cryptographic constants during decryption.
Thereby, we determine that Xilinx Vivado uses a well known cryptographic library to handle its private key.
Moreover, Xilinx secures Vivado using control-flow obfuscation, debugger detection, and exception-based anti-debugging techniques among others, see~\cite{Chen2016,Chen2008,Collberg1997} for a comprehensive overview on software protection techniques.
However, the discovered cryptographic functions do not exhibit any such software protections.

The targeted Xilinx Vivado installation supports five different 2048-bit RSA key pairs \cite{XilinxIP}. 
Our analysis subsequently reveals one of the five RSA private keys using dynamic analysis.
We verify correctness by decryption of an openly-available \ac{IP} encrypted using the corresponding public key.

\subsection{Code-Lifting Attack}\label{subsec:whitebox:case_xilinx_lifting}
Using dynamic analysis, we observe that the discovered RSA decryption routine is no longer triggered for the four other RSA private keys.
In order to identify the decryption routine used for these keys, we statically follow the call graph of the discovered decryption, aiming to locate the dispatcher calling the respective decryption functions for the different keys.
However, large parts of the functions within the call-graph (excluding cryptographic primitives) are protected by software obfuscation.
By getting around these protective measures, we reveal a reference to another presumably cryptographic function.
Observing the inputs and outputs of the newly discovered function using debugging reveals that it indeed performs an RSA decryption and is only invoked for the remaining four RSA private keys.
Due to the non-standard RSA decryption implementation that is heavily based on custom precomputed tables, we conclude that the function implements an RSA white-box.
By triggering the RSA decryption on different private keys, we observe that the same code is executed for all four available white-box keys, but different data structures are accessed depending on the selected key.

To obtain a decryption oracle that allows to recover the RSA-encrypted session keys, we perform a code-lifting attack by replicating the white-box RSA decryption in Python.
The replication of the white-box decryption breaks the white-box property of \textit{security against code-lifting attacks}.

\subsection{White-Box Simplification}\label{subsec:whitebox:case_xilinx_simplification}
In anticipation of the private key extraction attack in the next section, we briefly describe high-level implementation concepts and our simplifications thereof.
Note that manipulations to the implementation of the RSA white-box are in line with our attacker model defined in Section~\ref{sec:whitebox:standard}.

\par\smallskip\textbf{Montgomery Multiplications.} 
First, we identify the use of Montgomery multiplications~\cite{Montgomery1985} by identifying Montgomery-specific constants such as $R = 2^{|d|} \bmod N$ for private key bit-length $|d|$ and modulus $N$ within the extracted data structures.
To simplify the algorithmic description of the white-box implementation, we replace all Montgomery transformations and multiplications with isomorphic standard modular operations.

\par\smallskip\textbf{Ciphertext Dependency.} 
The white-box algorithm utilizes pre-computed values, some of which are selected based on the input ciphertext.
This ciphertext-dependency is introduced by first evaluating a hash function on the ciphertext and then applying multiplicative masking to the ciphertext based on the hash function output.
We observe that these values do not impact the decryption as verified by fixing the hash function output to zero and observing the output over multiple decryption runs.
Therefore, we conclude that we can safely remove this ciphertext-dependency.

\subsection{White-Box Description}\label{subsec:whitebox:case_xilinx_whitebox}
This section gives a formal description of the white-box RSA decryption routine we uncovered from within Xilinx Vivado.
Based upon this description, we subsequently present two sophisticated attacks that instantly lead to a full recovery of the protected RSA secret key in Section~\ref{subsec:whitebox:case_xilinx_extraction}.

\par\smallskip\textbf{Notation.}
We denote vectors as bold-face, lowercase letters such as \textbf{v} and matrices as bold-face, uppercase letters like~\textbf{A}. 
We address vector and matrix entries as $\vec v[i]$ and $\vec A[i][j]$ respectively. 

\par\smallskip\textbf{Ordinary RSA Decryption.}
Let us first consider an ordinary RSA decryption. 
Our goal is to recover the plaintext $m$ by computing the exponentiation $c^d = m \bmod N$. 
We introduce a window technique whose usual purpose is to provide a time-memory trade-off for fast exponentiation, but in our case additionally lies at the heart of Xilinx Vivado's white-box implementation.
Consider a secret decryption key $d$ of length $n$ that is split into $k = \lceil \frac{n}{5} \rceil$ chunks of five bits each, i.e., 
\[ 
d = d_0 + d_1\cdot 2^5 + d_2 \cdot 2^{10} + \ldots + d_k \cdot 2^{5k} \textrm{ with } d_i \in \mathbb{Z}_{32}.
\]
We thus write $\mathbf{d} =  (d_0, d_1, \ldots, d_k) \in \mathbb{Z}_{32}^{k+1}$ for the vector containing these secret key chunks. Then it follows that
\begin{align}
 m & = c^d = c^{d_0 + d_1\cdot 32 + d_2 \cdot 32^{2} + \ldots + d_k \cdot 32^{k}} = \prod_{i=0}^k \left(c^{d_i}\right)^{32^i} \nonumber \\
 & = \left(\left( (c^{d_k})^{32} \cdot c^{d_{k-1}} \right)^{32} \cdot \ldots \cdot c^{d_1} \right)^{32} \cdot c^{d_0} \bmod N.
\label{eq:rsa}
\end{align}
Note that we can precompute the ciphertext-dependent vector $ \mathbf{c} = (c^0, c^1, \ldots, c^{31}) \in \mathbb{Z}_N^{32}$ once at the beginning and use it as a lookup table throughout the decryption process. Thus, the computation of Equation~(\ref{eq:rsa}) is realized by procedure \textsc{EXP}($\textbf{c}, \textbf{d} $) in Algorithm~\ref{alg:whitebox:rsa_exp} comprising only $k$ multiplications and exponentiations in $\mathbb{Z}_N$.
\begin{algorithm}[H]
    \caption{\textsc{Exp}($\textbf{c}, \textbf{d} $)}
    \label{alg:whitebox:rsa_exp}
    \begin{algorithmic}[1]
        \State{$m = \mathbf{c}[\mathbf{d}[k]] = c^{d_k}$}
        \For {$i=k-1$ \textbf{downto} $0$}
            \State{$m = m^{32} \cdot \mathbf{c}[\mathbf{d}[i]] = m^{32} \cdot c^{d_i} \bmod N$}
        \EndFor
        \State\Return{$m$}
    \end{algorithmic}
\end{algorithm}

\par\smallskip\textbf{Hiding $\mathbf{d}$.} 
Notice that in its current form, Algorithm~\ref{alg:whitebox:rsa_exp} directly operates on (and thereby reveals) $\mathbf{d} = (d_0, \ldots, d_k)\in \mathbb{Z}_{32}^{k+1}$ as input.
Hence, an attacker could straightforward recover the secret key $d = \sum_{i=0}^k d_i \cdot 32^i$ from memory. 

The core idea behind the discovered white-box approach is to hide all original $5$-bit values $d_i \in \mathbb{Z}_{32}$ of $\textbf{d}$ via some secret \textit{permutation} $\pi: \mathbf{Z}_{32} \rightarrow \mathbf{Z}_{32}$. 
The bijection $\pi$ is hard-coded within the implementation in an obfuscated manner.
Each of the four available secret keys using the white-box utilizes a different $\pi$.

Notice that $\pi$ as well as the inverse permutation $\pi^{-1}$ can efficiently be represented by providing a lookup table with values $\pi(0), \ldots, \pi(31)$ requiring only $32 \cdot 5=160$ bits. 

As a result, instead of $\mathbf{d}$ itself, only the obfuscated key vector $\hat{\mathbf{d}} = (\pi^{-1}(d_0), \ldots, \pi^{-1}(d_k))$ is revealed during execution of the white-box algorithm. 
Since the computation $\pi(\hat{\mathbf{d}}) = \mathbf{d}$ immediately yields the original secret key, an attacker may find considerable value in trying to recover the secret permutation $\pi$. 
However, the number of permutations on $\mathbf{Z}_{32}$ is given as $32! > 2^{117}$. 
Therefore, trying to guess $\pi$ using a brute-force approach is infeasible. 
This also provides reasoning for the choice of a chunk size of $k=5$ bits, since it is the smallest $k$ that provides sufficient security against brute-force attacks.

Now, let us precompute the vector $\bar{\mathbf{c}} = (c^{\pi(0)}, \ldots, c^{\pi(31)})$. When subsequently executing \textsc{Exp}($\bar{ \textbf{c}}, \hat{\textbf{d}}$) on the permuted inputs $(\bar{ \textbf{c}}, \hat{\textbf{d}})$ instead of $(\textbf{c}, \textbf{d})$, the intermediate values
\[
  \bar{ \textbf{c}}[\hat{\textbf{d}}[i]] = \bar{ \textbf{c}}[\pi^{-1}(d_i)] = c^{\pi(\pi^{-1}(d_i))} = c^{d_i}.
\]
are computed.
Thus, \textsc{Exp}($\bar{ \textbf{c}}, \hat{\textbf{d}}$) outputs the same plaintext $m=c^d$ as \textsc{Exp}(${ \textbf{c}}, {\textbf{d}}$) does on the original inputs.

\par\smallskip\textbf{Randomization of $\mathbb{\bar{ \textbf{c}}}$.} 
Notice that an attacker knows the ciphertext $c$ and thus can compute $c^{0}, \ldots, c^{31}$ himself.
As a result, $\mathbb{\bar{ \textbf{c}}} = (c^{\pi(0)}, \ldots, c^{\pi(31)})$ would immediately reveal the secret permutation $\pi$. 
In order to prevent this disclosure of $\pi$, 
a randomized and obfuscated precomputation vector ${\hat{ \textbf{c}}}$ has to be defined. 
To this end set
\begin{align}
  \mathbf{r} & = (r_0, \ldots, r_{31}) \textrm{ for some } r_i \in_R \mathbb{Z}_N^* \textrm{ and} \nonumber \\ 
  {\hat{ \textbf{c}}} & = \mathbf{r} \cdot \mathbf{\bar{\textbf{c}}} = (r_{\pi(0)} \cdot c^{\pi(0)}, \ldots, r_{\pi(31)} \cdot c^{\pi(31)}). \label{eq:whitebox:Obfuscated_c}
\end{align}
Since the $r_i$ are chosen uniformly at random, $\mathbb{\hat{ \textbf{c}}}$ does not reveal any information about $\pi$. 
Looking at the output of an execution of \textsc{Exp}($\hat{ \textbf{c}}, \hat{\textbf{d}}$), we notice that the intermediate values are now of the form 
\[
  \hat{ \textbf{c}}[\hat{\textbf{d}}[i]] = \hat{ \textbf{c}}[\pi^{-1}(d_i)] = r_{\pi(\pi^{-1}(d_i))} \cdot c^{\pi(\pi^{-1}(d_i))} = r_{d_i} \cdot c^{d_i}.
\]
Given Equation~(\ref{eq:rsa}), it is not hard to see that \textsc{Exp}($\hat{ \textbf{c}}, \hat{\textbf{d}}$) therefore produces the output 
\begin{align*}
 &  \left( \left( (r_{d_k} \cdot c^{d_k})^{32} \cdot r_{d_{k-1}}c^{d_{k-1}} \right)^{32} \cdot \ldots \cdot r_{d_1}c^{d_1} \right)^{32} \cdot r_{d_0}c^{d_0} \\
 & = c^d \cdot \prod_{i=0}^k r_{d_i}^{32^i} = m \cdot  \prod_{i=0}^k r_{d_i}^{32^i} \bmod N.
\end{align*}
Now define the constant term $r = \prod_{i=0}^k r_{d_i}^{-32^i}  \bmod N$. 
In order to reconstruct the desired RSA plaintext $m$ one has to multiply the output of \textsc{Exp}($\hat{ \textbf{c}}, \hat{\textbf{d}}$) by $r$. 
Subsequently we obtain 
\begin{equation}
\label{eq:whitebox:dec_correctness}
  c^d = \textsc{Exp}({ \textbf{c}}, {\textbf{d}}) = r \cdot \textsc{Exp}(\hat{ \textbf{c}}, \hat{\textbf{d}}).
\end{equation}

During the precomputation step an attacker can not directly compute ${\hat{ \textbf{c}}} = (r_{\pi(0)} \cdot c^{\pi(0)}, \ldots, r_{\pi(31)} \cdot c^{\pi(31)})$, 
since the implementation hides the values of $r_i$ and $\pi(i)$ for $i \in \{0, \ldots, 31\}$ from the attacker. 
To this end, the discovered cryptographic algorithm presents a clever approach towards realizing a white-box RSA decryption.

\par\smallskip\textbf{Realization of $\textbf{r}$ and $\pi$.}
As revealed by the reverse engineering efforts described in Section~\ref{subsec:whitebox:case_xilinx_lifting}, 
Xilinx Vivado executes the white-box RSA precomputation shown in Algorithm~\ref{alg:whitebox:whitebox_pre} on any given ciphertext $c \in \mathbb{Z}_N$.
Algorithm~\ref{alg:whitebox:whitebox_pre} computes a vector $\textbf{s} \in \mathbb{Z}_N^{33}$ comprising powers $s_i = (\alpha c + \beta)^i$ of the linearly transformed ciphertext $c$ 
for all $0 \leq i \leq 32$ using some hard-coded constants $\alpha, \beta \in \mathbb{Z}_N$.
As we will show in the following, the fixed matrix $\vec A \in \mathbb{Z}_N^{32 \times 33}$ utilized in Algorithm~\ref{alg:whitebox:whitebox_pre} realizes 
(and hides) the secret permutation $\pi$ on $\mathbb{Z}_{32}$ and the randomization vector $\mathbf{r} = (r_0, \ldots, r_{31})$ from Equation~(\ref{eq:whitebox:Obfuscated_c}).

\begin{algorithm}
    \caption{\textsc{Obfusc-Precomp}($c$)}
    \label{alg:whitebox:whitebox_pre}
    \begin{algorithmic}[1]
        \State Use hard-coded $ \mathbf{A} \in \mathbb{Z}_N^{32 \times 33}, \textbf{t'} = (t_0, \ldots, t_{31}) \in \Z_N^{32}, \alpha, \beta \in \mathbb{Z}_N$.
        \State{$\textbf{s} =(s_0, \ldots, s_{32})$ with $s_i = (\alpha c + \beta)^i \bmod N$ }
        \State{$\textbf{t} = -c^{32} \cdot \vec t'$ }
        \State\Return{$\mathbf{As} + \mathbf{t} = {\hat{ \textbf{c}}} = (r_{\pi(0)} \cdot c^{\pi(0)}, \ldots, r_{\pi(31)} \cdot c^{\pi(31)})$}
    \end{algorithmic}
\end{algorithm}
Given just the description of Algorithm~\ref{alg:whitebox:whitebox_pre}, it remains unclear how the output $\mathbf{As} + \mathbf{t}$ realizes an obfuscated vector $\hat{ \textbf{c}}$. 
Using the \textit{binomial theorem}, we can write $s_i$ as
\begin{align*}
    s_i & =  (\alpha c + \beta)^i = \sum_{j=0}^i \binom{i}{j}   \left( \alpha c \right)^{j} \beta^{i-j}\quad \forall i : 0 \leq i\leq 32.
\end{align*}
Thus, within the sum making up $s_i$ the coefficient of $c^{j}$ with $j \leq i$ is given as $\binom{i}{j}  \alpha^{j} \beta^{i-j}$. 
Let us subsequently define the following lower-triangular matrix $\vec M \in \mathbb{Z}_N^{33 \times 33}$ with entries 
\[\vec{M}[i][j] = \binom{i}{j}\alpha^j \beta^{i-j} \quad \forall i,j: 0 \leq j \leq i \leq 32,
\]
and $\vec{M}[i][j] = 0$ otherwise: 
\[
\small{
  \mathbf{M} = 
  \left(
  \begin{array}{cccccc}
  1 &  &  &  &  & \\
  \beta & \alpha &  &  &  & \\
  \beta^2  & 2\alpha\beta & \alpha^2 &  &  &   \\
  \beta^3 & 3 \alpha\beta^2 & 3 \alpha^2 \beta & \alpha^3 & &  \\
   \vdots & \vdots & \vdots & \vdots &   \ddots \\
  \beta^{32}& 32 \alpha \beta^{31} & \binom{32}{2} \alpha^2 \beta^{30} & \binom{32}{3} \alpha^3 \beta^{29}  & \ldots &  \alpha^{32} 
  \end{array}
  \right).
}
\]
Note that we omit the zeros from the depiction of $\textbf{M}$ for ease of presentation. 

Let us define the vector $\vec c = (1, c, c^2, \ldots, c^{32})$ made up of powers of the ciphertext $c$. 
Then by definition it follows that
\[
   \vec M \cdot \vec c = \vec s.
\]

The following theorem shows that Algorithm~\ref{alg:whitebox:whitebox_pre} indeed computes the obfuscated vector $\hat{ \textbf{c}}$ if and only if $\mathbf{AM}$ directly reveals the secret permutation $\pi$ as well as the randomization vector $\mathbf{r}$.

\begin{theorem}
\label{theo:whitebox:main}
  Let $\vec e_i \in \{0,1\}^{32}$ be the $i^{\textrm{th}}$ unit vector. We have $\mathbf{As} + \mathbf{t} = {\hat{ \textbf{c}}} = (r_{\pi(0)} \cdot c^{\pi(0)}, \ldots, r_{\pi(31)}  \cdot c^{\pi(31)})$ if and only if 
  \[
      \mathbf{AM} = 
  \left(
  \begin{array}{cl}
  r_{\pi(0)} \vec{e_{\pi(0)}} & t_0 \\
  r_{\pi(1)}  \vec{e_{\pi(1)}} & t_1 \\  
  \vdots & \vdots  \\
  r_{\pi(31)}  \vec{e_{\pi(31)}} & t_{31} \\  
  \end{array}
  \right). 
  \]
In other words, for $0 \leq i < 32$ the $i^{\textrm{th}}$ row vector of $\mathbf{AM}$ is entirely made up of $0$'s, 
except for the last and the $\pi(i)^{th}$ position, at which we get entries $t_i$ and $r_{\pi(i)} $ respectively.
\end{theorem}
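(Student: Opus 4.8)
The plan is to collapse Algorithm~\ref{alg:whitebox:whitebox_pre}'s output into a single polynomial identity in the ciphertext $c$ and then compare coefficients. First I would invoke the relation $\mathbf{M}\mathbf{c} = \mathbf{s}$ established just above the theorem, where $\mathbf{c} = (1, c, c^2, \ldots, c^{32})$. This turns the returned vector into $\mathbf{As} + \mathbf{t} = \mathbf{A}\mathbf{M}\mathbf{c} + \mathbf{t}$. Writing $\mathbf{B} := \mathbf{A}\mathbf{M} \in \mathbb{Z}_N^{32\times 33}$ — a matrix that does \emph{not} depend on $c$ — and substituting $\mathbf{t} = -c^{32}\mathbf{t'}$, the $i$-th component of the output is the polynomial $\sum_{j=0}^{32} \mathbf{B}[i][j]\,c^j - t_i\,c^{32}$, of degree at most $32$ in $c$.

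Next I would set this equal to the $i$-th target entry $\hat{\mathbf{c}}[i] = r_{\pi(i)}\,c^{\pi(i)}$, itself a polynomial of degree $\pi(i) \le 31$. Since the precomputation is required to produce $\hat{\mathbf{c}}$ for \emph{every} ciphertext, these two polynomials must agree on all of $\mathbb{Z}_N$, which I would argue forces coefficient-wise equality (see below). Reading off coefficients of $1, c, \ldots, c^{32}$ and using $\pi(i) \in \{0,\ldots,31\}$: the coefficient of $c^{32}$ yields $\mathbf{B}[i][32] = t_i$; the coefficient of $c^{\pi(i)}$ yields $\mathbf{B}[i][\pi(i)] = r_{\pi(i)}$; every remaining coefficient yields $\mathbf{B}[i][j] = 0$. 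That is exactly the claimed shape of $\mathbf{A}\mathbf{M}$: its $i$-th row is $r_{\pi(i)}\vec{e}_{\pi(i)}$ on the first $32$ columns and $t_i$ on the last. For the converse I would simply plug that shape back in: the $i$-th output entry becomes $r_{\pi(i)}c^{\pi(i)} + t_i c^{32} - t_i c^{32} = r_{\pi(i)}c^{\pi(i)}$, so Algorithm~\ref{alg:whitebox:whitebox_pre} returns $\hat{\mathbf{c}}$, establishing the ``if'' direction.

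The main obstacle is making the step ``equal as functions on $\mathbb{Z}_N$ $\Rightarrow$ equal coefficients'' rigorous: a degree-$\le 32$ polynomial over $\mathbb{Z}_N$ that vanishes identically as a function need not, a priori, be the zero polynomial. Here I would invoke the (implicitly assumed) fact that $N = pq$ with $p, q$ vastly larger than $32$; reducing modulo $p$ and modulo $q$ via the CRT, a nonzero polynomial of degree $\le 32$ over $\mathbb{Z}_p$ has at most $32$ roots, so it cannot vanish on all of $\mathbb{Z}_p$, and likewise mod $q$, hence the polynomial is zero mod $N$ and its coefficients vanish. Alternatively, one can sidestep this entirely by treating $c$ as a formal indeterminate, since functional correctness of the white-box is exactly the statement that the identity holds formally. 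Everything else is bookkeeping, the one point worth highlighting being the engineered cancellation: the term $-c^{32}\mathbf{t'}$ is chosen precisely to absorb the column-$32$ contribution of $\mathbf{A}\mathbf{M}$, which is what lets the last column of $\mathbf{A}\mathbf{M}$ carry the hard-coded $\mathbf{t'}$ while the first $32$ columns encode the permutation $\pi$ and the masks $\mathbf{r}$.
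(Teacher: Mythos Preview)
Your proposal is correct and follows essentially the same route as the paper: substitute $\mathbf{s}=\mathbf{M}\mathbf{c}$ to rewrite the output row-wise as $(\vec a_i\mathbf{M})\cdot\mathbf{c}-t_ic^{32}$, equate to $r_{\pi(i)}c^{\pi(i)}$, and conclude by comparing coefficients (the paper phrases this as ``the identity has to hold for all $\mathbf{c}$''). If anything, you are more careful than the paper in justifying why equality as functions on $\mathbb{Z}_N$ forces coefficient-wise equality via the CRT and the bound $\deg\le 32<\min(p,q)$, a point the paper leaves implicit.
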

\begin{proof}
Let $\vec{a}_i$ be the $i^{\textrm{th}}$ row of $\vec A$. We then have to show that if for all $0 \leq i \leq 33$ we have $\vec{a}_i \cdot \vec s + \vec t[i] = r_{\pi(i)} c^{\pi(i)}$ it follows that
\begin{equation}
\label{eq:whitebox:identity}
  \vec a _i \cdot \vec{M} = (r_{\pi(i)} \vec{e_{\pi(i)}},\:t_i),
\end{equation} 
and vice versa. 
Hence, assume that $\vec a_i \cdot \vec s + \vec t[i] = r_{\pi(i)} c^{\pi(i)}$. Since we have $\vec s =  \vec M \cdot \vec c$ and $\vec t[i] = -t_i c^{32}$, we obtain
\begin{equation}
\label{eq:whitebox:identity2}
  \vec a_i \cdot \vec M \cdot \vec c = r_{\pi(i)} c^{\pi(i)} + t_i c^{32} = (r_{\pi(i)}  \vec{e_{\pi(i)}},\: t_i) \cdot \vec c.
\end{equation}
As the identity from Equation~(\ref{eq:whitebox:identity2}) has to hold independent of the ciphertext $c$ and thus for all $\vec c$, this already implies the desired identity from Equation~(\ref{eq:whitebox:identity}).

For the opposite direction, assume that Equation~(\ref{eq:whitebox:identity}) holds. This immediately implies Equation~(\ref{eq:whitebox:identity2}), from which we conclude that $\vec{a}_i \cdot \vec s + \vec t[i] = r_{\pi(i)} c^{\pi(i)}$.
\end{proof}

\subsection{Private Key Extraction}\label{subsec:whitebox:case_xilinx_extraction}
In the following, we propose two attack strategies to recover the secret permutation $\pi$ from the white-box cryptographic algorithm.
Hence, we show that the recovered white-box algorithm does not hold up to the notion of \textit{security against key extraction} as outlined in Section~\ref{subsec:whitebox:whitebox_crypto}. 

\par\smallskip\textbf{Chosen Ciphertext Attack.}
Extracting the key via a chosen ciphertext attack is straightforward once the ciphertext-dependency has been removed as outlined towards the end of Section \ref{subsec:whitebox:case_xilinx_lifting}.
Given the modified white-box algorithm, we first run \textsc{Obfusc-Precomp}($1$) from Algorithm~\ref{alg:whitebox:whitebox_pre} using ciphertext $c=1$ as input. 
By the correctness property of Algorithm~\ref{alg:whitebox:whitebox_pre}, we obtain the output 
\[
\pi(\mathbf{r}) =(r_{\pi(0)}, \ldots, r_{\pi(31)}).
\]
Next, we run \textsc{Obfusc-Precomp}($2$), resulting in output 
\[
(r_{\pi(0)}  2^{\pi(0)}, \ldots, r_{\pi(31)}  2^{\pi(31)}).
\]
After dividing each element of the second output by the respective $r_i$, we receive vector $\vec v = (2^{\pi(0)}, \ldots, 2^{\pi(31)}) \in \Z_N^{32}$ comprising powers of $2$. 
Since $\pi(i) \leq 31$ and $2^{31} < N$, no entry is reduced modulo $N$ such that $\vec v \in \Z^{32}$.
Therefore, computing the $\log_2$ of $\vec v$ element-wise yields the values of $\pi(0), \ldots, \pi(31)$ and thus discloses the secret permutation $\pi$.

\par\smallskip\textbf{Using Matrix $A$.} 
We now present a second attack that only requires access to the hard-coded values $\vec A$, $\alpha$, $\beta$ defining the two matrices $\vec A$ and $\vec M$.
We can directly read off $\vec A$ from Algorithm~\ref{alg:whitebox:whitebox_pre}. 
Using Theorem~\ref{theo:whitebox:main}, we can subsequently compute   \[
      \mathbf{AM} = 
  \left(
  \begin{array}{cl}
  r_{\pi(0)} \vec{e_{\pi(0)}} & t_0 \\
  \vdots & \vdots  \\
  r_{\pi(31)} \vec{e_{\pi(31)}} & t_{31} \\  
  \end{array}
  \right),
  \]
which reveals the permuted randomization $\pi(\vec r) = (r_{\pi(0)}, \ldots , r_{\pi(31)})$ and the secret permutation $\pi$.

\par\smallskip\textbf{Private Key Recovery.} 
\textsc{Obfusc-Precomp}($c$) computes $\hat{\vec c}$ on ciphertext $c$ as input. 
Recall from Equation~(\ref{eq:whitebox:dec_correctness}) that the RSA decryption can be computed on obfuscated vector $\hat{\vec c}$ as 
\[ 
    m = c^d = \textsc{Exp}(\hat{ \textbf{c}}, \hat{\textbf{d}}) \cdot \prod_{i=0}^k r_{d_i}^{-32^i},
\]
using the obfuscated secret key $\hat{\mathbf{d}} = (\pi^{-1}(d_0), \ldots, \pi^{-1}(d_k))$. 
Consequently, we can extract $\hat{\mathbf{d}}$ from $\textsc{Exp}(\hat{ \textbf{c}}, \hat{\textbf{d}})$ and recover the original RSA secret key by computing 
\[
    \pi(\hat{\mathbf{d}}) = \vec d = (d_0, \ldots , d_k) \textrm{ and } d = \sum_{i=0}^k d_i \cdot 32^{i}.
\]

Given public key $e$ and private key $d$, we retrieve the factorization $N=pq$ using the approach described in Section~\ref{subsec:whitebox:rsa}.
}
\section{Discussion}\label{sec:whitebox:discussion}
In the following, we discuss implications of our industry-wide security analysis, reflect on potential defenses, elaborate on existing literature, and finally outline future research directions.

\subsection{Implications}\label{subsec:whitebox:implications}
We now discuss implications of our attacks from different perspectives.

\textbf{\ac{EDA} Tools.}
Although with varying complexity, each case study results in a full break of the confidentiality and integrity of the respective IEEE~1735 implementation.
In particular, we extract all RSA private keys used within the analyzed \ac{EDA} tools for the decryption of protected \ac{IP}.
As explicitly permitted by IEEE~1735, three out of the six analyzed vendors exhibit no (noteworthy) private key protections at all. 
Reverse engineering of these unprotected tools is sometimes even less time-consuming than their installation process.
Since most \ac{IP} is protected using at least one of the targeted vendors' RSA public keys, we can decrypt almost \textit{all} protected \acp{IP} used throughout the industry.
Consequently, the affected \ac{IP} does not only present a target for \ac{IP} theft, but is also susceptible to malicious manipulations such as hardware Trojans~\cite{Bhasin2013,Zhang2011}.

\par\smallskip\textbf{IEEE Standard 1735-2014.}
Since the execution environment of \ac{EDA} tools is typically untrusted, they are naturally susceptible to software reverse-engineering and manipulation.
In this context, the recommendation of hard-coded vendor keys within IEEE~1735 is inherently insecure. 
Countering this critical threat by declaring it \enquote{\textit{out-of-scope}}~\cite[p.~11]{Automation2015} within IEEE~1735 and encouraging the prohibition of reverse engineering by adoption of end-user license agreements~\cite[p.~11]{Automation2015}, rather than providing guidance for a secured implementation, demonstrates a lack of a realistic risk assessment.
Moreover, even if the \ac{IP} decryption could be performed securely, the decrypted \ac{IP} core would still be stored in \ac{RAM} at runtime and could thus be extracted. 
In this regard, the standard mentions that \enquote{\textit{Each tool vendor needs to decide what, if any, anti-debug, anti-tamper, and anti-key-discovery technologies they will use here.}}~\cite[p.~11]{Automation2015}.
However, the standard provides a false sense of security since in practice its soundness is entirely founded on the protection of the private keys and the plaintext \ac{IP}.
As in other \ac{DRM} applications, \textit{security-by-obscurity} constitutes the \textit{only} available protection layer in this case.
Additionally, an \ac{IP} may contain session keys encrypted for multiple different tool vendors, therefore a single unprotected private key within one \ac{EDA} tool can compromise the entire scheme.
Hence, the central security assumptions of IEEE~1735 are invalid in practice. 
Future iterations of IEEE~1735 should at least \textit{explicitly} emphasize the importance of software protection, outline limitations of cryptography in a real-world setting, and \textit{clearly} communicate the threats arising thereof. 

\par\smallskip\textbf{Public-Key White-Box Cryptography.}
While symmetric white-box cryptography has received significant attention in recent years~\cite{Billet2003,Bock2020,Bos2016,Chow2002a,Karroumi2010}, public-key white-box schemes have been analyzed scarcely in the open literature so far. 
However, (presumably insecure) proprietary public-key white-box schemes are already deployed in practice to protect high-value \ac{IP}. 
In its current state and due to limited public scrutiny, we claim that public-key white-box cryptography lacks the security assurances required to be deployed on its own, i.e., without any additional software protections.
Our case studies demonstrate proprietary cryptography to (again) be ineffective in practice. 
On the basis of our recovery of two commercial public-key white-box decryption schemes, we demonstrated its weaknesses in regard to both central security notions of white-box cryptography.

\subsection{Thoughts on IP Protection in Practice}\label{subsec:whitebox:though_experiment}
We now reflect on general implications and the potential mitigation of threats arising from untrusted execution environments used for \ac{IP} protection schemes.

\par\smallskip\textbf{Integrity and Authentication.}
Some use-cases demand integrity and authentication of the \ac{IP} rather than confidentiality, as most protected \ac{IP} implement well-known interfaces or algorithms that typically do not provide a significant edge over competitor product portfolios.
This is especially true for security-sensitive and safety-critical applications, f.i., in military or critical infrastructure systems.
In these use-cases, the ability to inspect third-party \ac{IP} before integration may even be required.
Such applications primarily benefit from tamper-resistance (e.g., enforced by integrity protections) and authenticity of the \ac{IP} vendor to prevent supply chain attacks.

Integrity and authentication of an \ac{IP} can be achieved using digital signatures in combination with a \ac{PKI} or (in offline-settings) by manual comparison of hash-values computed over the encrypted \ac{IP}. 
Since integrity and authenticity are primarily in the interest of the \ac{IP} user, there is typically no incentive for them to bypass these protections on their machines. 
Nonetheless, we want to emphasize that this cannot defend against software manipulations of the \ac{EDA} tool that would enable an attacker to interfere with the \ac{IP} after integrity checks have been performed.
As a result, cryptographic integrity protections can defend against \ac{MITM} attackers trying to intercept and manipulate the \ac{IP} during shipment, but still fail to prevent attacks within the \ac{MATE} attacker model.

\par\smallskip\textbf{Confidentiality.}
Introducing a \ac{PKI} for the management and distribution of individual session keys at first seems to be an obvious choice.
However, the private keys will still end up in the \ac{EDA} tool's program memory and are therefore vulnerable to extraction.
Hence, for the sake of our argument, let us assume that a perfectly secure key storage solution exists and all cryptographic operations can be performed in a trusted execution environment (e.g., Intel SGX or ARM TrustZone).
Following our attacker model, a reverse engineer would still be able to tamper with the design software itself and read out its memory at runtime.
As soon as the \ac{EDA} tool performs any kind of operation on the protected \ac{IP} core (e.g., run a synthesis or implementation step), it must decrypt the \ac{IP} and keep its plaintext representation in memory during the entire operation.
At this point, an attacker can recover the decrypted \ac{IP} by dumping and inspecting the memory during operation.
To counter this threat, the \textit{entire} \ac{EDA} tool would have to  run in a trusted execution environment, which is not desirable for performance reasons. 

Even if all \ac{EDA} tool computations were to be performed in such a trusted environment, the result of the hardware design flow is either a bitstream file that configures an \ac{FPGA} or a layout description for an \ac{ASIC}.
Crucially, both a bitstream~\cite{SymbiFlow2018,Wolf,Yu2018} and a layout~\cite{Rajarathnam2020} contain all  information about an \ac{IP} core as they are simply a different gate-level netlist representation. 
A determined adversary can extract the high-level functionality of an \ac{IP} core through netlist analysis~ \cite{Albartus2020,MPI2019,Meade2018a,Subramanyan2014a}. 
We note that even though this approach arguably requires considerable efforts and specialized knowledge on hardware reverse engineering, it always succeeds given a motivated adversary.
Note that this is similar to other \ac{DRM} settings, as the protected data must be handed to the user eventually.

In summary, perfect confidentiality of protected \ac{IP} cannot be achieved, even when using strong cryptographic solutions and tools that are running in a trusted execution environment.
Thus so far, the only viable defenses are raise-the-bar countermeasures based on strong software defense mechanisms.

\par\smallskip\textbf{Raise-the-Bar Countermeasures.}
Similar to \ac{DRM} use-cases, cryptography alone cannot resolve the aforementioned issues to achieve confidentiality, hence the goal must be to increase the economic effort such that the costs of an infringement outweigh any potential profit. 
In our case studies, potent software obfuscation proves to be the one countermeasure that significantly increases attack time, while most anti-debugging measures were defeated using publicly available tooling.
Cryptography and even white-box implementations may play a vital part, but, in their current state must always be combined with strong software obfuscation techniques to suppress any attempt on static or dynamic analysis.

Another approach to accomplish confidentiality in such untrusted execution environments may be to outsource all computations that require access to a decrypted \ac{IP} core to a trusted (cloud-)server and employ a \textit{thin-client} solution.
This mitigates the threat of \ac{IP} disclosure and manipulation to some extend as the protected \ac{IP} itself is never decrypted on the \ac{IP} user workstation.
While thin-client solutions may be a viable for some use-cases, others such as the hardware development for military or critical infrastructure applications require the entire design process to be performed on \textit{air-gapped} machines located on manufacturer premises.
Despite all efforts, the \ac{IP} user (and potential attacker) still receives a bitstream or layout in the end, which may then be targeted by netlist reverse-engineering to defeat confidentiality and integrity.

\subsection{Related Work}\label{subsec:whitebox:related_work}
Our work relates to aspects of both hardware \ac{IP} protection and white-box cryptography.

\par\smallskip\textbf{IP Protection.}
The threat of hardware \ac{IP} infringement and the development of techniques to achieve \ac{IP} protection is widely covered in the literature~\cite{Bossuet2017,Mishra2017}.
While works commonly focus on hardware obfuscation, IEEE~1735~\cite{Automation2015} presents the first approach to an industry-wide solution based on well-known cryptographic algorithms.
In 2017, Chhotaray et~al.~\cite{Chhotaray2017} identify flaws in the choice of cryptographic primitives of IEEE~1735.
They mount both a padding-oracle and a syntax-oracle attack on an \ac{EDA} tool for \acp{FPGA}.
Subsequently, they recover the plaintext \ac{IP} and apply meaningful manipulations to insert a hardware Trojan.
However, their attacks did not question the fundamental security assumptions of IEEE~1735.
In response to findings of Chhotaray et~al., the IEEE issued a statement asking \ac{IP} owners and users to secure their supply chain and implement end-user agreements~\cite{Statement2017}.
The standard itself remains unchanged to this day.
Mirian et~al.~\cite{Mirian2016} discuss flaws within the implementation of Xilinx ISE, which was discontinued in 2012.
They extract the gate-level netlist of an \ac{IP} protected by IEEE~1735 by exploiting the insecure design flow of Xilinx ISE, but do not recover a high-level description.

\par\smallskip\textbf{White-Box Cryptography.}
White-box cryptography was first introduced by Chow et~al.~\cite{Chow2002a, Chow2002} in 2002.
In the last decades numerous efforts to formalize symmetric white-box security properties have been published~\cite{Bock2020,Delerablee2013,Saxena2009}. 
For now, white-box cryptography resembles an arms race with many such propositions \cite{Chow2002a,Chow2002,Karroumi2010} being broken shortly after publication \cite{Billet2003,Billet2004,Bos2016}.
In 2017, 2019, and 2021, white-box competitions were held at CHES~\cite{CHES2017,CHES2019,CHES2021} to stimulate research of white-box implementations and their security. 
The first competition was subsequently evaluated by Bock et~al.~\cite{Bock2020b}.
Apart from classical attacks, gray-box approaches that use side-channel information to extract information from the white-box present another critical threat~\cite{Bos2016,Goubin2020,Sanfelix2015}.  
In academia, white-box research is typically focused on symmetric encryption schemes.
To the best of our knowledge, the work of Barthelemy~\cite{Barthelemy2020} is the only (academic) publication on public-key white-box cryptography based on a novel lattice-based approach.
The latest white-box competition at CHES~\cite{CHES2021} set out the target to develop a secure ECDSA white-box implementation.
At the time of writing, each and every contestant has been broken within two days of publication.

\subsection{Future Work}\label{subsec:whitebox:future_work}
In light of our discussion in Section~\ref{subsec:whitebox:though_experiment}, we argue that an all-encompassing solution for \ac{IP} protection is out of reach given the contemporary hardware design process.
Nonetheless, IEEE standard 1735-2014 is in dire need for a timely revision that not only takes its choice of cryptographic primitives into account, but pays respect to its foundational security assumptions in untrusted execution environments as well.
Furthermore, additional precautions for integrity assurance of the \textit{entire} \ac{IP} are essential to restore trust in a future iteration of IEEE~1735.
Hence, we provide an incentive for standardization bodies and researchers alike to develop sound solutions for the protection of hardware \ac{IP}.

Our work practically demonstrates the discrepancy between commercially-available public-key white-box solutions and the limited understanding of public-key white-box security in academia. 
Hence, both academia and industry can build upon our work to take the presented cryptanalytic attacks into consideration and build the foundations for a secure usage of public-key white-box cryptography in the future. 
Moreover, we hope that our work sparks interest in developing (and analyzing) public-key white-box algorithms to improve the understanding of such schemes.
\section{Conclusion}\label{sec:whitebox:conclusion}
IEEE standard 1735-2014 aims to provide recommended practices for the secure management of electronic design \ac{IP}~\cite{Automation2015}. 
Our work presented structural weaknesses of IEEE~1735 that allow for extraction of the \ac{EDA} tool vendors' private keys. 
Consequently, our attacks enable to decrypt, maliciously modify, and re-encrypt all allegedly protected \ac{IP}. 
In particular, our case studies uncovered that three vendors use no software protection to safeguard their \textit{hard-coded} private keys, two vendors employ code obfuscation and anti-debugging, and three vendors utilize a public-key white-box cryptographic approach. 
We analyzed the white-box schemes and presented cryptanalytical attacks on all three schemes to effectively extract their hidden keys within seconds.

Our insights on realistic software analysis capabilities in untrusted execution environments demand a rethinking of the use of third-party \ac{IP} cores in its current form as both \ac{IP} authors and \ac{IP} users are at risk with limited prospect of improvement.
\printbibliography

\appendix[IEEE and Vendor Responses]
The vendors covered by our case studies have been cooperative throughout the entire responsible disclosure process and (at the point of writing) have either updated their tools already or are working towards fixing the disclosed vulnerabilities. 
Moreover, several affected vendors publicly acknowledge our findings in coordination with our publication \intel{\siemens{\cite{IntelAck,SiemensAck,XilinxAck}}} and guide users through the necessary steps to improve the protection of their \ac{IP}.
So far, the IEEE has declined multiple offers to participate in discussions on our attacks and the future directions of IEEE~1735.
We note that within their correspondence, the IEEE insisted on referring to IEEE~1735 as a \enquote{recommended practice} instead of a \enquote{standard}.

\end{document}